\newtheorem{conj}{Conjecture}
\newtheorem{thm}[conj]{Theorem}
\newtheorem{prop}[conj]{Proposition}
\newtheorem{lemma}[conj]{Lemma}
\newtheorem{definition}{Definition}
\theoremstyle{definition}
\definecolor{applegreen}{rgb}{0.55, 0.71, 0.0}
\definecolor{alizarin}{rgb}{0.82, 0.1, 0.26}
\definecolor{slategray}{rgb}{0.44, 0.5, 0.56}
\definecolor{amber}{rgb}{1.0, 0.75, 0.0}
\definecolor{mikadoyellow}{rgb}{1.0, 0.77, 0.05}
\definecolor{cadmiumgreen}{rgb}{0.0, 0.42, 0.24}
\definecolor{forestgreen}{rgb}{0.13, 0.55, 0.13}
\definecolor{lust}{rgb}{0.9, 0.13, 0.13}
\definecolor{denim}{rgb}{0.08, 0.38, 0.74}
\definecolor{purpleheart}{rgb}{0.41, 0.21, 0.61}
\definecolor{cherryblossompink}{rgb}{1.0, 0.72, 0.77}
\definecolor{darktangerine}{rgb}{1.0, 0.66, 0.07}
\definecolor{bananayellow}{rgb}{1.0, 0.88, 0.21}
\definecolor{ltgray}{rgb}{0.83, 0.83, 0.83}
\newcommand{\C}{\mathcal{C}}
\newcommand{\SIIA}{\sigma_{\hbox{\rm \scriptsize IIA}}}
\newcommand{\SU}{\sigma_{\hbox{\rm \scriptsize U}}}
\newcommand{\dswap}{d_{\hbox{\rm \scriptsize swap}}}
\newcommand{\voters}{\mathcal{V}}
\newcommand{\candidates}{\mathcal{C}}
\newcommand{\profiles}{\mathcal{P}} % All profiles with #1 voters and #2 candidates
\newcommand{\RR}{\mathbb{R}}
\newcommand{\TopSort}{\mathrm{TopSort}}
\newcommand{\PWCGraph}{\mathrm{PWCGraph}}
\title{Quantitative Relaxations of Arrow's Axioms}
\author{%
  Suvadip Sana\\
  Department of  Statistics and Data Science\\
  Cornell University\\
  \texttt{ss2776@cornell.edu} \\
  % examples of more authors
  \And
 Daniel Brous \\
  Department of  Computer Science\\
  Cornell University\\
  \texttt{pdb62@cornell.edu} \\  
  \And
  Martin T. Wells \\
  Department of Statistics and Data Science\\
  Cornell University\\
  \texttt{mtw1@cornell.edu} 
   \And
  Moon Duchin  \\
  Department of Mathematics \\ and School of Public Policy\\
  Cornell University\\
  \texttt{mduchin@cornell.edu}}
\begin{document}

\maketitle

\begin{abstract}
  %\hspace*{0.5in}%\hangindent=0.5in

In this paper we develop a novel approach to relaxing Arrow's axioms for voting rules, addressing a long-standing critique in social choice theory. Classical axioms (often styled as {\em fairness axioms} or {\em fairness criteria}) are assessed in a binary manner, so that a voting rule fails the axiom if it fails in even one corner case. Many authors have proposed a probabilistic framework to soften the axiomatic approach.  Instead of immediately passing to random preference profiles, we begin by measuring the degree to which an axiom is upheld or violated on a given profile.  We focus on two foundational axioms---Independence of Irrelevant Alternatives (IIA) and Unanimity (U)---and extend them to take values in $[0,1]$. Our $\SIIA$ measures the stability of a voting rule when candidates are removed from consideration, while $\SU$ captures the degree to which the outcome respects majority preferences. 
Together, these metrics quantify how a voting rule navigates the fundamental trade-off highlighted by Arrow's Theorem. 
We show that $\SIIA\equiv 1$ recovers classical IIA, and $\SU>0$ recovers classical Unanimity, allowing a quantitative restatement of Arrow's Theorem.  

In the empirical part of the paper, we test  these metrics on two kinds of data:  a set of over 1000 ranked choice preference profiles from Scottish local elections, and a batch of synthetic preference profiles generated with a Bradley-Terry-type model.  We use those to investigate four positional voting rules---Plurality, 2-Approval, 3-Approval, and the Borda rule---as well as the iterative rule known as Single Transferable Vote (STV).  The Borda rule consistently receives the highest $\SIIA$ and $\SU$ scores across observed and synthetic elections.  This compares interestingly with a recent result of Maskin where he shows that weakening IIA to take voter preference intensity into account can be used to uniquely pick out the Borda rule.

\end{abstract}

\section{Introduction and Motivation}\label{sec:intro}

The aggregation of individual preferences into a collective ranking or decision stands as a foundational challenge across both the classical domain of social choice theory and the modern frontiers of AI and machine learning. 
From a history centered on systems of election for political representation, the applications have expanded to recommender systems, multi-agent deliberation, and LLM-driven decision pipelines, where the need to reconcile conflicting preferences while preserving structural axioms has become increasingly consequential. Central to this discourse is Arrow's famous impossibility theorem \citep{arrow1950difficulty}, which reveals that no ranking-based voting rule can simultaneously satisfy three intuitive axioms---Independence of Irrelevant Alternatives (IIA),  Unanimity (U),  and Non-dictatorship---when three or more alternatives are present.

While the theoretical landscape around Arrow's Theorem has been extensively explored,  practical implications have remained less clear. Most studies of voting rules stick with binary criteria---either a rule satisfies IIA as a guarantee over all possible preference profiles, or it does not.\footnote{A {\em preference profile}, or simply {\em profile}, is a record of the ranking of alternatives given by each agent/voter.} This binary framing obscures meaningful variation across real-world instances and fails to offer a quantitative grasp of how badly or mildly a rule may violate an axiom. This paper offers a new direction: we propose real-valued metrics that quantify violations of the axioms on specific preference profiles, enabling empirics that allow us to compare voting rules for their degree of compliance.

For example, both Borda and Plurality violate the IIA axiom in some cases, but the severity and structure of the IIA violations (and the kinds of profile on which they occur) can differ substantially. By shifting from pass/fail fairness conditions to continuous, interpretable metrics that capture the magnitude and profile-specific structure of axiom noncompliance, we can hope to select voting rules that are best adapted to particular uses.

Both metrics take a voting rule $f$ and profile $P$ as input and output values in the range $[0, 1]$, with higher scores indicating stronger adherence to the respective axioms.

\begin{itemize}
    \item \textbf{$ \SIIA(f, P)$}: A metric using Kendall tau (swap) distance between output rankings to capture the stability of $f$ when arbitrary candidates are removed from the election.
    \item \textbf{$ \SU(f, P)$}: A metric that quantifies the degree to which the voting outcomes respect majority preferences.  
\end{itemize}

\textbf{Contributions.} Our main contributions are as follows:

\begin{itemize}
    \item \textbf{Relaxed Axioms:} We define and motivate $\SIIA$ and $\SU$, then derive their basic properties.

    \item  \textbf{Quantitative Arrow's Theorem:} We establish a quantitative analog of Arrow's impossibility theorem: any rule satisfying $ \SIIA = 1$ and $ \SU > 0 $ for all profiles must be dictatorial. This establishes the axiomatic faithfulness of our metrics.

    \item \textbf{Empirical Evaluation:} We apply our framework to two datasets: (i) a real-world corpus of 1,070 ranked-choice elections from Scottish local governments, and (ii) synthetic profiles generated under a Bradley--Terry-based model. 
\end{itemize}

%%%%%%%%
\subsection{Related Work} \label{subsec:related}

Quantifying axiom violations---rather than merely detecting their presence---has emerged as a central tool in multiple domains, including large language models   and algorithmic fairness. 
Recent works  have aimed to move beyond a pass/fail framework by developing real-valued metrics that capture the degree of deviation from normative principles. Similar efforts appear in machine learning, where  metrics have been proposed to quantify violations of group equity in classification, individual fairness in decision systems, and robustness in multi-agent planning.
This shift toward gradated rather than discrete evaluation has also found resonance in the context of large language models (LLMs), where researchers have begun to assess the extent to which model output violates classical axioms of rationality and fairness.
In AI alignment, and in reinforcement learning with human feedback (RLHF), preference aggregation plays a central role, yet the stability and consistency of these aggregations remain poorly understood. As LLMs and AI agents increasingly participate in deliberative and decision-making tasks, the need for principled, interpretable metrics to evaluate axiom adherence---especially under collective settings---has been flagged by researchers such as \citet{conitzer2024social,zhao2024electoral,zhao2024measuring}.

Within the more voting-focused setting of computational social choice theory, researchers such as  \citet{dougherty2020probability} analyze the frequency of (binary) IIA violations under simulated or real-world profiles, contrasting the stability of rules like Borda and Plurality.
Going beyond this, other new work has explored relaxed versions of IIA.  Notably, Eric Maskin  recently introduced a modification called MIIA \citep{maskin2025borda}.  Where IIA requires that whether $A\succ B$ in the output ranking depends only on the $A$ vs.~$B$ preference of individual voters, MIIA also allows for dependence on the number of candidates ranked {\em between} them, thought of as an indicator of preference strength.  However, satisfying MIIA remains a binary yes/no condition.   
Social choice researchers have also considered the frequency of rule violations for other axioms like monotonocity \citep{mccune2024monotonicity} within a binary framework.

\citet{delemazure2024independence} and \citet{zhao2024measuring} may provide the closest fit for the spirit of our contribution in this paper, because they attempt to quantify violation magnitude as well as frequency.  Delemazure and colleagues introduce a notion of {\em distortion} to measure how the removal of alternatives impacts social welfare. Still, this differs in two key ways from our work.  First, it requires either embeddings in a latent metric space or a measure of utility, so it cannot be computed directly from ranked preference profiles (which is the classical setting for Arrow's Theorem).  Second, and relatedly, the reliance on latent data requires the use of simulated elections rather than observed elections, and the simulations are conducted with uniform distributions that are widely recognized to be highly unrealistic.\footnote{Those uniform models are called IC and IAC.  In terms of their realism, see for instance \citet{tideman2010structure}, which explicitly includes the assessment ``None of the 11 models discussed so far [including IC, IAC] are based on the belief that the associated distributions of [preference profiles] might actually describe rankings in actual elections."}

Our IIA metric has a strong family resemblance to the metric introduced by \citet{zhao2024measuring}. 
Our $\SIIA$ measures stability/instability by holding out candidates one at a time.  We measure the swap distance between two alternatives:  running the voting rule on the full profile, then removing a candidate; or  removing the candidate first, then running the voting rule. Their similarity score uses edit distance in place of swap distance, but is otherwise similar.  
We argue that swap distance is more appropriate for rankings:  edit distance does not distinguish between swapping your third and fourth choices or your first and tenth.
Further, that paper restricted its findings to one table of statistics that focused on the removal of ``gold" (ground-truth) answers from rankings.  The analysis here is considerably more extensive.

The remainder of the paper is organized as follows. Section~\ref{sec:background} reviews the social choice background and introduces the needed notation. Section~\ref{sec:QFM} defines the relaxed axioms, analyzes their properties, and verifies the connection to Arrow's Theorem. Section~\ref{sec:empirical} presents empirical results on both observed and synthetic data. Finally, Section~\ref{sec:discussion} concludes with discussion, limitations, and  directions for future work.

%%%%%%%%%
\section{Background}\label{sec:background}

\subsection{Notation for elections, preferences, and preference profiles}
First, we give some notation for the elections we will consider.  We let $\voters$ be the set of $n$ voters and  $\candidates$  the set of $m$ candidates.  Each voter will offer a complete or partial ranking of the candidates as their 
{\em ballot}.  We will write $S(\candidates)$ for the set of permutations of the candidates, which is a copy of the symmetric group $S_m$, and we will write $\hat{S}(\candidates)$ for the extended set of partial rankings.  For instance, if $\candidates=\{A,B,C\}$, then $(A,B,C)$ is an example of a complete ranking and $(A)$ is a partial ranking, both valid ballots.\footnote{We do not allow ties, except implicitly in the sense that partial ballots may be read as giving equal rankings to the non-mentioned candidates.} 
We will sometimes use the alternative notation 
$A\succ_i B$ to mean that voter $i$ ranks $A$ above $B$, and $A\asymp_i B$ to mean that the voter leaves both unranked. Then $A\succeq_i B$ means $A\succ_i B$ or $A\asymp_i B$.  In order to measure the difference between rankings, a natural choice of metric is the Kendall tau or swap distance, which we will denote by $\dswap$.  This measures the number of neighbor swaps necessary to transform one ranking to another, and extends naturally to partial rankings.

The set of  {\em preference profiles} (or simply {\em profiles}) for elections on $\voters,\candidates$ is given by 
$\profiles:=\hat{S}(\candidates)^\voters$, which is a record of a ballot for each voter.
With this notation, a {\em voting rule} is a function $f:\profiles\to S(\candidates)$.  That is, even though the ballots are allowed to be partial rankings, the voting rule must output a complete ranking in a deterministic way, such as by using a tie-breaking procedure.
It will sometimes be convenient to use the notation $A\succ_{f(P)}B$ to mean that when the voting rule is applied to the profile $P$, the outcome ranks $A$ over $B$.

\begin{definition}[Relative ranking vector]
    Given a profile $P\in\profiles$ and candidates
    $A,B\in\C$, we let $X_{A,B}(P)\in\RR^n$ be the {\em relative ranking vector} given by
    \[
        X_{A,B}(P)_i:=\begin{cases}
            1 & A\succ_i B\\
            1/2 & A\asymp_i B\\
            0 & B\succ_i A.
        \end{cases}
    \]
\end{definition}

\begin{lemma}[Properties of relative ranking vector] 
For any profile $P\in \profiles$, \
    $(X_{A,B}+X_{B,A})\cdot \mathbbm{1}=n$. 
 On the other hand, $(X_{A,B}-X_{B,A})\cdot \mathbbm{1}$ is the (signed) margin with which the voters overall prefer $A$ to $B$.
\end{lemma}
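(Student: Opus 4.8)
The plan is to reduce both claims to a coordinate-by-coordinate computation. Each side of the two asserted identities is the dot product of a vector in $\RR^n$ with the all-ones vector $\mathbbm{1}$, and such a dot product is simply the sum of the vector's entries. Hence it suffices to understand the entries $(X_{A,B})_i$ and $(X_{B,A})_i$ for a single fixed voter $i$, and then sum over the $n$ voters.

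First I would fix a voter $i$ and split into the three exhaustive and mutually exclusive cases used to define the relative ranking vector: $A\succ_i B$, $A\asymp_i B$, and $B\succ_i A$. Reading off the definition in each case (and using the same definition with the roles of $A$ and $B$ swapped to obtain $X_{B,A}$), we get $(X_{A,B})_i=1,(X_{B,A})_i=0$ in the first case; $(X_{A,B})_i=(X_{B,A})_i=1/2$ in the second; and $(X_{A,B})_i=0,(X_{B,A})_i=1$ in the third. The key observations to extract are that in all three cases the entrywise sum $(X_{A,B})_i+(X_{B,A})_i$ equals $1$, while the entrywise difference $(X_{A,B})_i-(X_{B,A})_i$ equals $+1$, $0$, or $-1$ respectively.

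For the first identity, summing the constant $1$ over all $n$ voters immediately yields $(X_{A,B}+X_{B,A})\cdot\mathbbm{1}=n$. For the second, summing the difference over all voters collects a $+1$ from each voter who ranks $A$ above $B$, a $-1$ from each voter who ranks $B$ above $A$, and $0$ from each voter who ranks neither; the total is therefore the number of voters preferring $A$ minus the number preferring $B$, which is exactly the signed margin by which the electorate prefers $A$ to $B$.

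I do not anticipate a genuine obstacle, as the statement follows directly from the definition. The only point requiring care is the indifference case $A\asymp_i B$: one should confirm that assigning $1/2$ to both entries is consistent with the symmetric roles of $A$ and $B$, and note that this is precisely the choice that makes abstaining voters cancel out of the margin (contributing $0$ to the difference) while still contributing their full weight of $1$ to the head count $n$.
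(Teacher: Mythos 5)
Your proof is correct and takes essentially the same approach as the paper: a per-voter case analysis showing the coordinates of $X_{A,B}+X_{B,A}$ are all $1$ and those of $X_{A,B}-X_{B,A}$ lie in $\{1,0,-1\}$ according to the voter's preference, followed by summing over the $n$ voters. Your version is slightly more explicit about the three cases and the role of the $1/2$ convention, but the argument is the same.
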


\begin{proof}
For the sum, note that a voter $i$ that ranks at least one of $A,B$ contributes $1$ in the $i$th coordinate to exactly one of $X_{A,B}$ and $X_{B,A}$, while a voter that leaves both unmentioned contributes $1/2$ to each term.  Thus the sum is the all-ones vector.

The difference has entries in $\{1,-1,0\}$ according to whether the $i$th voter prefers $A$ to $B$, prefers $B$ to $A$, or ranked neither. Thus, summing its entries gives the signed margin.
\end{proof}

\begin{definition}[Alignment] The {\em alignment} between a voting rule $f$ and a profile $P$ on $A,B\in\C$ is measured by the share of voters ranking $A,B$ the same way as the output ranking $f(P)$, with a tie counting as half agreement:
\[
    I_{A,B}(f,P):=\begin{cases}
    \frac 1n\|X_{A,B}(P)\|_1, & A\succ_{f(P)} B\\
    \frac 1n\|X_{B,A}(P)\|_1, & B\succ_{f(P)} A.
    \end{cases}
\]
Then we can write an overall misalignment score  as 
$M(f,P):=\min\limits_{A,B\in\C} I_{A,B}(f,P)$.  
\end{definition}

This $M$ finds the worst alignment of the voters with the outcome.  In the case of complete rankings, this flags an anti-majoritarian outcome if $M<\frac 12$.

\begin{definition}[Disqualification]
Given a profile $P\in\profiles$ or a ballot $\beta\in \hat{S}(\C)$ and a candidate $C\in\C$, we write $P^C$ or $\beta^C$ to represent the condensed profile or ballot on $\C'=\C\setminus\{C\}$, i.e., the candidate set without $C$. 
\end{definition}

Finally, we record the margins of victory in a complete graph on the candidates and we employ a definition from graph theory that picks out rankings of candidates that respect majority preferences.

\begin{definition}[Pairwise comparison graph and topological sort]\label{def:topsort}
For any profile $P$, the pairwise comparison graph (or PWCG)  is a directed weighted graph $G(P)$ with vertex set $V=\C$ defined by putting a directed edge from $A$ to $B$ with weight $(X_{A,B}-X_{B,A})\cdot \mathbbm{1}$, which is the margin by which voters prefer $A$ to $B$.  

We say that a complete ranking of vertices, 
$\pi\in S(V)$, is a {\em topological sort} of a directed graph $G=(V,E)$ if $(u,v)\in E \implies u\succ v$ in $\pi$.  
\end{definition}

%%%
\subsection{Voting rules}
In this paper, we focus on a small menu of common ranking-based voting rules. The first several rules are in the class called {\em scoring rules}:  a score vector $(s_1,s_2,\dots,s_m)$ is applied, so that first-place votes are worth $s_1$ points, second-place votes are worth $s_2$, and so on. For simplicity, we adopt the convention that candidates who are unranked in a given ballot receive no points from that ballot. Then the output ranking given by $f$ is just a list of the candidates in order of their total score from all voters. Table \ref{tab:scoring-rules} summarizes the scoring rules compared below.

\begin{table}[htb!]
    \centering
    \begin{tabular}{ccl}
Voting rule & Score vector & Description\\
\hline 
Borda & $(m,m-1,\dots,1)$ & Linear scoring. \\
$3$-approval & $(1,1,1,0,\dots,0)$ & Top three get a point.\\
$2$-approval & $(1,1,0,0,\dots,0)$ & Top two get a point.\\
Plurality  & $(1,0,0,0,\dots,0)$ & Only first choice gets a point.\\       
    \end{tabular}
    \caption{Scoring rules.}
    \label{tab:scoring-rules}
\end{table}

Finally, we compare these with another voting rule that cannot be described with a score vector, but rather operates iteratively in a round-by-round fashion.
Single transferable vote (or \textbf{STV}) is a multi-winner voting rule for a fixed number $k$ of seats, designed to output a set of $k$ winners.
A {\em threshold of election} is also fixed---often, this is the so-called {\em Droop quota} $n/(k+1)$---as the amount of support needed to be elected.  
If any candidate has strictly more than that level of first-place support in the profile $P$, they are elected.  If no candidate does, then proceeding round by round, we eliminate the candidate with the least first-place support and {\em transfer} their support to the next choice of their voters. When a candidate is elected, their surplus support is transferred to the next choice of their voters; for instance, if they received 120\% of the threshold, then their votes are transferred with weight $20/120=1/6$.  
Though STV is designed as a multi-winner system, we report the results as a ranking by filling in winners from the top of the ranking in order of election, and filling in eliminated candidates from the bottom in order of elimination.  Any candidates still left once the seats have been filled are placed in between, in order of first-place votes when the process terminates.\footnote{For further details,  see the documentation for the {\sf VoteKit} Python package, which was used in the empirical sections of this paper \citep{VoteKit}.}
 
\subsection{Observed and synthetic elections}

We apply our framework to a real-world dataset of Scottish ranked-choice elections, comprising over 1000 local elections.\footnote{This dataset was  collected by David McCune and published by the Data and Democracy Lab.  It can be found at  \url{https://github.com/mggg/scot-elex}.} For administrative purposes, Scotland is divided into several hundred wards, each conducting a ranked choice election by STV for its local government every five years since 2012.
An example is shown in Figure~\ref{fig:pwcg_ren_2022_ward2} with the candidates listed  as they appeared on the ballots (alphabetically by last name) and the PWCG displayed at right.

\begin{figure}[htb!]\centering 
\begin{tikzpicture}
\node at (0,1) [right] {1: Edward Grady (Labour)};
\node at (0,.5) [right] {2: Kate Hughes (Labour)};
\node at (0,0) [right] {3: Cathy McEwan (SNP)};
\node at (0,-.5) [right] {4: Dale Nelson (Conservative)};
\node at (0,-1) [right] {5: Jim Paterson (SNP)};
\node at (8,0) {\includegraphics[width=2in]{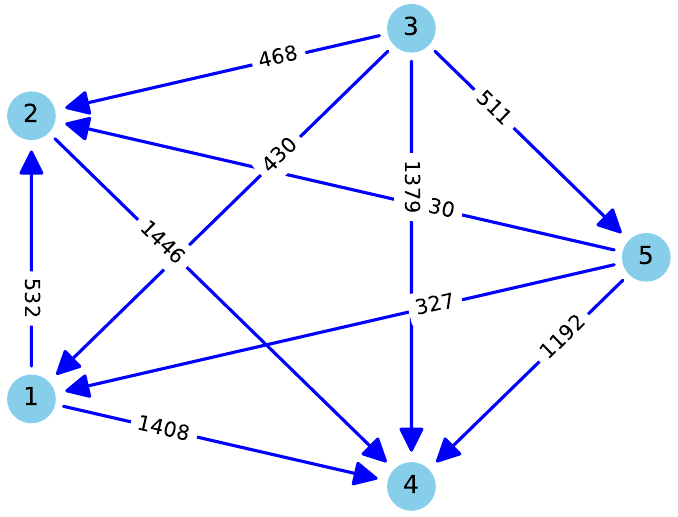}};
\end{tikzpicture}    
    \caption{The pairwise comparison graph for the Scottish local government election in Renfrewshire Ward 2, 2022. In this case, Cathy McEwan is the Condorcet candidate, preferred head-to-head to all others. Below, Table~\ref{tab:Renf-results} will show the results when various voting rules are applied.}
    \label{fig:pwcg_ren_2022_ward2}
\end{figure}

%%%%%%%%%%%%%%%%
\section{Relaxed axioms and their properties}\label{sec:QFM}

We propose two real-valued metrics that can be evaluated on particular preference profiles, relaxing classical axioms.

\subsection{Independence of Irrelevant Alternatives (IIA)}

We first extend the axiom of Independence of Irrelevant Alternatives (IIA). Informally, a voting rule satisfies IIA if the relative position of any two candidates in the social ranking only depends on their relative ranking by individual voters.

\begin{definition}[IIA and $\SIIA$]

A voting rule $f$ satisfies IIA if for any two profiles $P,P'\in \profiles$ and any $A,B\in\candidates$,
    \[
    X_{A,B}(P)=X_{A,B}(P')\implies
         \left(A\succ_{f(P)} B \iff A\succ_{f(P')}B\right).
    \]
Equivalently, $f$ satisfies IIA iff $f(P^C)=f(P)^C$ for all $P\in\profiles, C\in\candidates$.

We define $\SIIA(f,P)$ to measure the failure of IIA:
\[
    \SIIA(f,P) := 1 - \frac{\sum\limits_{C \in\candidates} \dswap (f(P^{C}), f(P)^C)}{m \binom{m-1}{2}}.
\]

\end{definition}

Note that the largest possible swap distance between vectors of length $m$ comes from total reversal, which gives $\dswap=\binom{m-1}2$.  
From this characterization, we obtain some immediate consequences.
\vspace{.5cm}
\begin{prop}[IIA vs. $\SIIA$] \label{thm:IIA}
The IIA property is equivalent to $\SIIA(f,P)\equiv 1$ over $P\in\profiles$.

Furthermore, $\SIIA(f, P) = 0$ iff $f(P^C)$ is the complete reversal of $f(P)^C$ for every candidate $C\in\C$.
\end{prop}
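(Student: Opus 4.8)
The plan is to treat both claims as statements about the single quantity
$\Sigma(f,P):=\sum_{C\in\candidates}\dswap\!\left(f(P^{C}),f(P)^{C}\right)$,
observing that $\SIIA(f,P)=1-\Sigma(f,P)/\big(m\binom{m-1}{2}\big)$ is an affine, strictly decreasing function of $\Sigma$. The two boundary values $\SIIA=1$ and $\SIIA=0$ therefore correspond precisely to $\Sigma$ attaining its minimum $0$ and its maximum $m\binom{m-1}{2}$, so the entire proposition reduces to characterizing when $\Sigma$ is extremal.

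For the first claim I would argue that $\SIIA(f,P)=1$ iff $\Sigma(f,P)=0$. Since $\dswap$ is a metric, each summand is nonnegative and vanishes iff its two arguments coincide; hence $\Sigma(f,P)=0$ iff $f(P^{C})=f(P)^{C}$ for every $C\in\candidates$. Demanding $\SIIA(f,P)\equiv 1$ over all $P\in\profiles$ is then exactly the condition $f(P^{C})=f(P)^{C}$ for all $P$ and all $C$, which the definition already records as equivalent to IIA. This closes the first part without further work.

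For the second claim I would note that $\SIIA(f,P)=0$ iff $\Sigma(f,P)$ equals its maximum $m\binom{m-1}{2}$. Because $\Sigma$ is a sum of $m$ nonnegative terms, each bounded above by $\binom{m-1}{2}$ (the largest swap distance between two rankings of the $m-1$ candidates in $\candidates\setminus\{C\}$), the sum attains its maximum iff every individual term attains $\binom{m-1}{2}$. Finally, a swap distance between two $(m-1)$-candidate rankings equals $\binom{m-1}{2}$ exactly when one ranking is the complete reversal of the other, which yields the stated equivalence.

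The only nonroutine ingredient is the extremal fact about Kendall tau distance: that the maximal swap distance between two rankings of $k$ elements is $\binom{k}{2}$, achieved only by total reversal (here with $k=m-1$). This is the standard observation that the reversal is the unique permutation inverting all $\binom{k}{2}$ ordered pairs, and the excerpt already invokes it in justifying the denominator; so the main effort is simply to state this uniqueness cleanly and to confirm the per-term bound that makes $m\binom{m-1}{2}$ the genuine maximum of $\Sigma$.
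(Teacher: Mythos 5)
Your reduction of both claims to the extremality of the single quantity $\Sigma(f,P)=\sum_{C\in\candidates}\dswap\bigl(f(P^C),f(P)^C\bigr)$ is sound, and your treatment of the second claim is correct and complete: a sum of $m$ terms, each bounded above by $\binom{m-1}{2}$, is maximal iff every term is, and the maximum swap distance between two rankings of $m-1$ candidates is attained exactly at total reversal. (The paper's appendix in fact proves only the first claim, so here you supply detail the paper leaves implicit.)

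The gap is in the first claim. You correctly reduce $\SIIA(f,P)\equiv 1$ to the condition $f(P^C)=f(P)^C$ for all $P\in\profiles$ and $C\in\candidates$ (using that $\dswap$ is a metric), but you then close the argument by citing the ``Equivalently'' sentence in the definition. That sentence is not a free hypothesis: the equivalence between the primary, two-profile formulation of IIA (if $X_{A,B}(P)=X_{A,B}(P')$ then $f(P)$ and $f(P')$ order $A,B$ the same way) and the single-candidate-removal condition is precisely the content of this proposition, and it is what the paper's own proof establishes. The nontrivial direction is: assuming $f(P^C)=f(P)^C$ for single candidates and all profiles, deduce the two-profile condition. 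The paper does this by induction --- single-candidate stability implies $f(P)^Z=f(P^Z)$ for any subset $Z\subseteq\candidates$ of removed candidates --- and then, given $X_{A,B}(P)=X_{A,B}(P')$, takes $Z=\candidates\setminus\{A,B\}$, so that $P^Z=(P')^Z$ (the restricted profiles carry only the $A$ vs.\ $B$ information, on which the two profiles agree), whence $f(P)^Z=f(P^Z)=f((P')^Z)=f(P')^Z$, and the two outputs order $A,B$ identically. The forward direction likewise needs the observation that $X_{A,B}(P)=X_{A,B}(P^C)$ for $A,B\neq C$, so that IIA forces $f(P^C)$ and $f(P)$ to agree on every pair not involving $C$. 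Without these arguments your proof of the first claim is circular: it assumes, as if by definition, exactly the equivalence the proposition asks you to prove.
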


Going further, the meaning of $\SIIA(f,P)=1-\alpha$ is that, on average, disqualifying a candidate alters the ranking of the remaining candidates by a fraction $\alpha$ of a complete reversal.

\subsection{Unanimity  (U)}

The other axiom we extend is the unanimity criterion (U). Informally, a voting rule $f$ satisfies U if in the case that $P$ contains a unanimous preference for some candidate over another one, the output $f(P)$ respects that preference.
Note that this says nothing about profiles in which there are no unanimous preferences---which means this axiom is silent in most real-world elections. This motivates us to create a metric that also gives nontrivial information in cases where the majority preference is not unanimous.

\begin{definition}[U and $\SU$]
    A voting rule $f$ satisfies U if for any  $P\in\profiles$ and $A,B\in \C$, if $A\succ_i B$ for all voters $i\in\voters$, then $A\succ_{f(P)} B$.

    Equivalently, 
    \[f ~\hbox{\rm satisfies~U} \iff 
    I_{A,B}(f,P)>0 \quad \forall P,A,B
    \iff 
    M(f,P)>0 \quad \forall P.\]

Then we can easily define a metric of majoritarianism that extends the unanimity criterion:
\[
\SU(f,P):=\begin{cases}
        \frac{M}{1-M}, & M(f,P)<1/2\\
        1, & M(f,P)\ge 1/2.
    \end{cases}
\]
\end{definition}

This is set up so that $\SU(f,P)=0$ when $P$ is a witness to the failure of the unanimity criterion.  In other words, $\SU(f,P)=0$ exactly means that $P$ has a unanimous preference between some two candidates and $f(P)$ ranks them in reverse.

By contrast, the definition makes it easy to see that $\sigma_U(f,P)= 1$ precisely means that every majority preference in $P$ is respected.  (And clearly this implies that every unanimous preference is respected.)

This description also gives a natural interpretation to scores less than one.  If $\SU=\frac{\alpha}{1-\alpha}<1$, then $\alpha<1/2$, and there is some pair of candidates where only $\alpha$ share of voters agrees with the output ranking.

\subsection{Optimization of U}

One crucial observation is that $\SU(f,P)$ can be computed from only $P$ and $f(P)$.   This means that it is feasible to efficiently construct a voting rule $f$ that is engineered for optimal performance on $\SU$.
By contrast, computing $\SIIA(f,P)$ requires knowledge of not only $f(P)$, but also 
$f(P^C)$ for each $C\in\C$.

The idea of optimizing U on a given profile is simple: if the graph has no directed cycles (called {\em Condorcet cycles} in social choice theory), then any topological sort (Definition~\ref{def:topsort}) is a valid $f(P)$ that achieves a perfect $\SU=1$.  
However, if there are cycles, then no complete ranking can be fully consistent with the directed edges.  
In that case, a best-possible ranking is found by successively deleting all edges of minimal weight until a topological sort exists.  
Topological sorts can be executed in polynomial time with Kahn's algorithm.  This allows us to state the following result, with the precise algorithm formulation and proof of runtime deferred to the supplemental material.
\vspace{.5cm}
\begin{thm}[Greedy sort]\label{thm:uf_alg}
The greedy algorithm described above computes a voting rule $f_U$ that maximizes $\SU$ on all profiles, in the sense that $\sigma(f_U,P)\ge \sigma(g,P)$ for all voting rules $g$ and profiles $P$.
The algorithm to compute $f_U(P)$ runs in time $O(m^2n+m^3)$.
\end{thm}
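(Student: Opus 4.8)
The plan is to reduce the optimization of $\SU$ to a clean combinatorial problem on the pairwise comparison graph $G(P)$ and then show the greedy deletion solves that problem exactly. First I would use the observation already noted in the text that $\SU(g,P)$ depends on the rule $g$ only through the single output ranking $g(P)\in S(\C)$; hence it suffices to prove that, for each fixed $P$, the greedy ranking attains $\max\SU(f,P)$ over all admissible choices of the output $f(P)\in S(\C)$, after which any fixed tie-break among optimal rankings turns $f_U$ into a deterministic rule without changing the score. Writing $a+b=n$ for the weak-majority tallies of a pair $\{A,B\}$ (the Lemma on the relative ranking vector), a ranking that agrees with the weak majority scores $I_{A,B}\ge\tfrac12$, while a ranking that reverses a strict-majority edge of margin $w>0$ scores exactly $I_{A,B}=\tfrac12-\tfrac{w}{2n}$. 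Taking the outer minimum, and writing $\pi:=f(P)$, I obtain the key identity
\[
M(f,P)=\tfrac12-\tfrac{B(\pi)}{2n}\quad\text{whenever }B(\pi)>0,
\]
where $B(\pi)$ is the largest margin of any edge of $G(P)$ pointing backward in $\pi$ (and $M\ge\tfrac12$ precisely when $B(\pi)=0$). Since $\SU$ is a continuous non-decreasing function of $M$, maximizing $\SU$ is equivalent to \emph{minimizing} $B(\pi)$ over $\pi\in S(\C)$.

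Next I would solve this min--max problem by a threshold argument. Let $G_{>t}$ denote the subgraph of $G(P)$ keeping only edges of margin $>t$. The elementary equivalence $B(\pi)\le t \iff \pi$ is a topological sort of $G_{>t}$ yields
\[
\min_{\pi}B(\pi)=t^*:=\min\{\,t\ge 0: G_{>t}\ \text{is acyclic}\,\},
\]
and acyclicity is monotone in $t$ since $G_{>t'}\subseteq G_{>t}$ for $t'\ge t$. The greedy rule deletes all edges of current minimal weight in batches, which is exactly the passage from $G_{>t}$ to $G_{>t'}$ for the next distinct margin $t'$, and it halts at the first threshold making the remaining graph acyclic. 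The matching lower bound follows from cycle persistence: any directed cycle present before the final batch uses only edges of margin strictly above the previous threshold, so it survives every smaller-weight deletion, forcing the halting threshold to equal $t^*$. Finally, the output topological sort keeps every surviving (margin $>t^*$) edge forward, so all its backward edges were among the deleted edges and have margin $\le t^*$; thus $B(\pi)\le t^*$, which together with the lower bound $\min_\pi B=t^*$ forces optimality of the greedy ranking.

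For the runtime, building $G(P)$ costs $O(m^2 n)$, since each of the $\binom{m}{2}$ signed margins $(X_{A,B}-X_{B,A})\cdot\mathbbm{1}$ is computed by scanning the $n$ ballots. The greedy phase sorts the $O(m^2)$ edge weights and binary-searches for $t^*$ among the distinct values, testing acyclicity at each step with Kahn's algorithm in $O(m+m^2)=O(m^2)$ time; with $O(\log m)$ tests plus a final topological sort this is $O(m^2\log m)$, comfortably inside the stated $O(m^2 n+m^3)$. I expect the main obstacle to be the optimality argument rather than the bookkeeping: specifically, establishing the lower bound that no ranking beats the greedy threshold (the cycle-persistence step) and pinning down the identity $M=\tfrac12-B(\pi)/(2n)$ with correct treatment of tied pairs and of the interaction between violated and respected edges inside the outer minimum. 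Once that identity and the $B(\pi)\le t\iff$ topological-sort equivalence are in hand, both the correctness and the runtime claims follow directly.
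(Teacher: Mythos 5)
Your proof is correct, and its first half coincides in substance with the paper's key lemma: the identity $I_{A,B}=\tfrac12-\delta_{A,B}/(2n)$ for pairs where the output reverses a strict majority, combined with the monotonicity of $x\mapsto x/(1-x)$, is exactly the paper's Lemma~\ref{lem:uf_and_margins} (that $\SU(f,P)$ is a fixed, $f$-independent, strictly decreasing function of the largest weight of an edge of $G(P)$ disagreeing with $f(P)$); you also correctly note the reduction from rules to rankings, since $\SU(g,P)$ depends on $g$ only through $g(P)$. Where you genuinely diverge is the optimality argument. The paper argues by contradiction: it supposes some rule $g$ beats $f_U$ on a profile $P$, uses the lemma to conclude that the heaviest edge disagreeing with $f_U(P)$ strictly outweighs every edge disagreeing with $g(P)$, and then traces the while loop to show it would already have exited (with $g(P)$ a topological sort of the surviving graph) before removing that heavier edge---a contradiction. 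You instead prove a direct min--max identity: $B(\pi)\le t$ iff $\pi$ topologically sorts $G_{>t}$, hence $\min_\pi B(\pi)=t^*=\min\{t\ge 0: G_{>t}\ \text{acyclic}\}$, and the greedy halts exactly at threshold $t^*$ with output satisfying $B(\pi)\le t^*$. Your route buys a cleaner, contradiction-free argument that also treats the cycle-free case uniformly, whereas the paper's lemma assumes a Condorcet cycle exists and the acyclic case must be split off separately in its theorem proof. One caveat: your runtime analysis is of a binary-search variant rather than the literal while loop of Algorithm~\ref{alg:uf_opt}; this is acceptable for the theorem as stated, because any topological sort of $G_{>t^*}$ is an optimal output (so the variant computes an $\SU$-maximizing rule, up to the same tie-breaking freedom the greedy already has), and your bound $O(m^2n+m^2\log m)$ sits inside the stated $O(m^2n+m^3)$---but you should say this equivalence explicitly rather than leave it implicit. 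Incidentally, your threshold bookkeeping (rounds indexed by distinct margins) is also the correct way to bound the iteration count of the literal loop, where the paper's own justification (``at most $m$ iterations since at least one edge is removed per iteration'') is loose as stated.
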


\subsection{Quantitative Arrow's Theorem}

Arrow’s classical impossibility theorem \citep{arrow1950difficulty} asserts that no voting rule can satisfy IIA, U, and non-dictatorship simultaneously when there are at least three candidates. Traditionally, these axioms are treated in a binary fashion, either fully satisfied or not.  Our metrics $\SIIA$ and $\SU$ allow us to write a quantitative version of Arrow's Theorem. This reformulation maintains the Arrow's Theorem's essential message while enabling empirical and comparative analysis of voting rules. We state and prove the result below.

\textbf{Quantitative Arrow's Theorem:} A voting rule $f$ is a dictatorship if and only if $\SIIA(f,P) = 1$ and $\SU(f,P) > 0$ for all profiles $P$.

\begin{proof}
Suppose that $ \SIIA(f, P) = 1$ for all profiles $P$. Then, by Proposition 2, $f$ satisfies Arrow's IIA axiom in the classical (binary) sense. Furthermore, if $ \SU(f, P) > 0$ for all $P$, then $f$ does not violate Unanimity. Together, these conditions imply that $f$ satisfies both IIA and U in Arrow's original framework. By Arrow's impossibility theorem it follows that $f$ must be dictatorial.

For the converse, assume that $f$ is a dictatorship. Then, by definition, $f$ satisfies IIA and U in the binary (axiomatic) sense. By Proposition 2, this implies that $ \SIIA(f, P) = 1$ and since $f$ do not violate Unanimity, we get $ \SU(f, P) > 0$ for all profiles $P$.
\end{proof}

%%%%%%
\section{Empirical Results}\label{sec:empirical}

\subsection{Scottish elections}

In Table~\ref{tab:Renf-results}, we return to the 2022 Scottish election of the Renfrewshire Ward 2 first shown in Figure~\ref{fig:pwcg_ren_2022_ward2}.

\begin{table}[htb!]
    \centering
    {\small \begin{tabular}{c|c|c|c|c|c}
         & \textbf{Borda} & \textbf{3-App} & \textbf{2-App} & \textbf{Plurality} & \textbf{STV} \\
        \hline \hline 
        1st & 3 & 3 & 3 & 3 & 3 \\
        2nd & 5 & 2 & 5 & 1 & 1 \\
        3rd & 1 & 5 & 1 & 5 & 5 \\ 
        4th & 2 & 1 & 2 & 4 & 4 \\
        5th & 4 & 4 & 4 & 2 & 2 \\
        \hline \hline 
        $\SIIA$ & 0.93 & 0.90 & 0.93 & 0.80 & 0.73 \\
        $\SU$   & 1.00 & 0.75 & 1.00 & 0.44 & 0.44 
    \end{tabular}}
    \caption{Using the preference profile from Renfrewshire Ward 2, 2022, we compute the output rankings and the $\sigma$ metrics under the five different systems of election considered in this paper. Borda and 2-approval give the same outcome, which respects all majority preferences in this profile.  Note that Plurality and STV give the same output ranking, which ensures that the $\SU$ values are equal, but they perform differently when candidates are removed, leading to different $\SIIA$.}
    \label{tab:Renf-results}
\end{table}

Next, we sweep our scores over the full Scottish dataset and show the results in Figures~\ref{fig:scottish-boxplots}--\ref{fig:scottish-maps}.
We find that the Borda rule most frequently receives the highest  $\SIIA$ and $\SU$ scores among the alternatives.

\begin{figure}[htb!]
  \centering
  \includegraphics[width=5in]{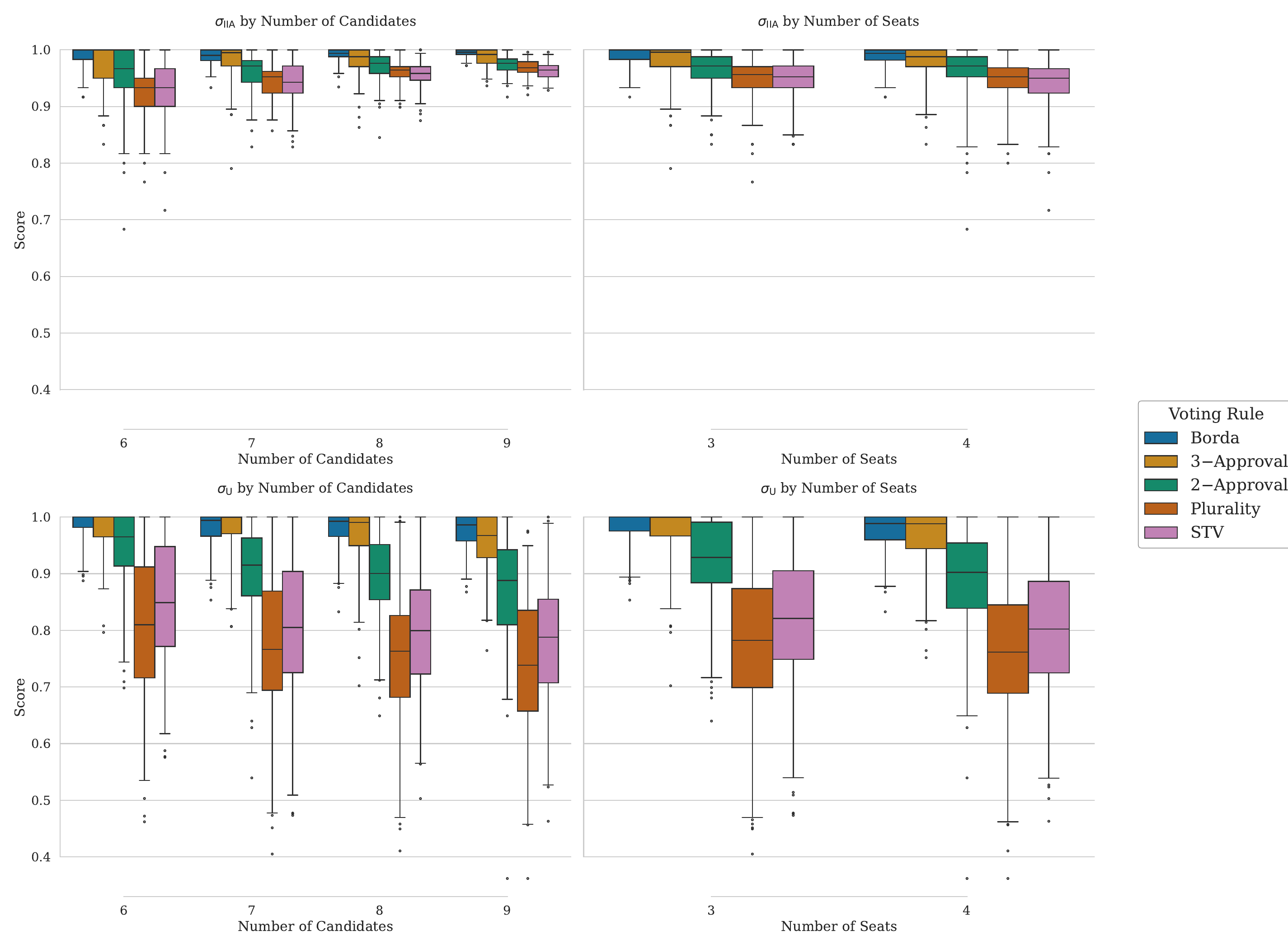}
  \caption{
Boxplots of the $\SIIA$ and $\SU$ metrics for the five voting rules on the Scottish elections. Most elections have 6-9 candidates and are used to select 3-4 winners, and the plots are split out accordingly. As we might expect, elections with more candidates are more stable, but tend to contain more violations of pairwise preferences.}
\label{fig:scottish-boxplots}
\end{figure}

\begin{figure}[htb!]
  \centering
  \includegraphics[width=5in]{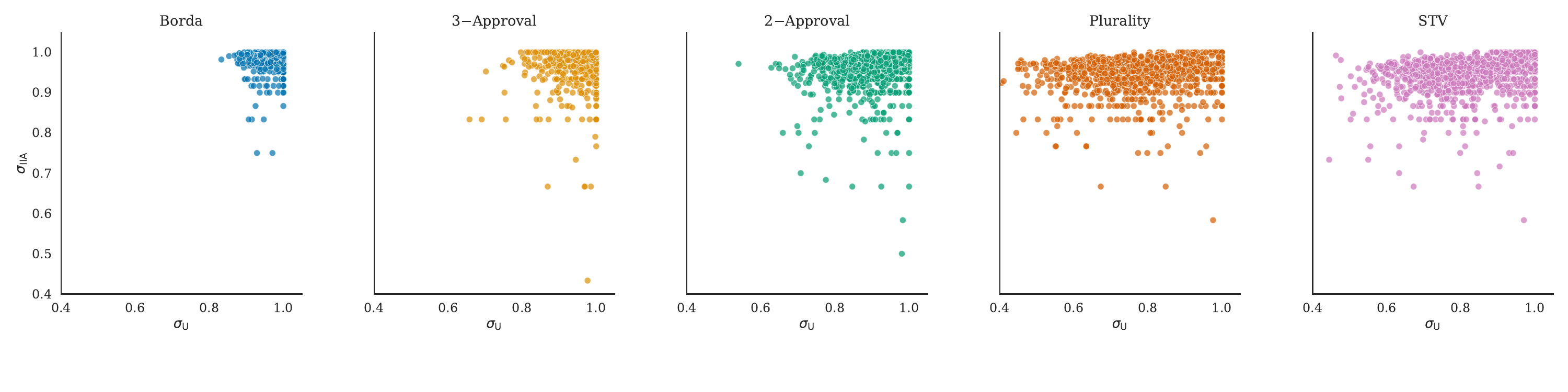}
  \caption{
Scatterplots of  $\SIIA$ and $\SU$  on the Scottish data help us visualize the relationship between the two metrics.
}
\label{fig:scottish-scatter}
\end{figure}

\clearpage

\begin{figure}[htb!]
  \centering
  \includegraphics[width=6in]{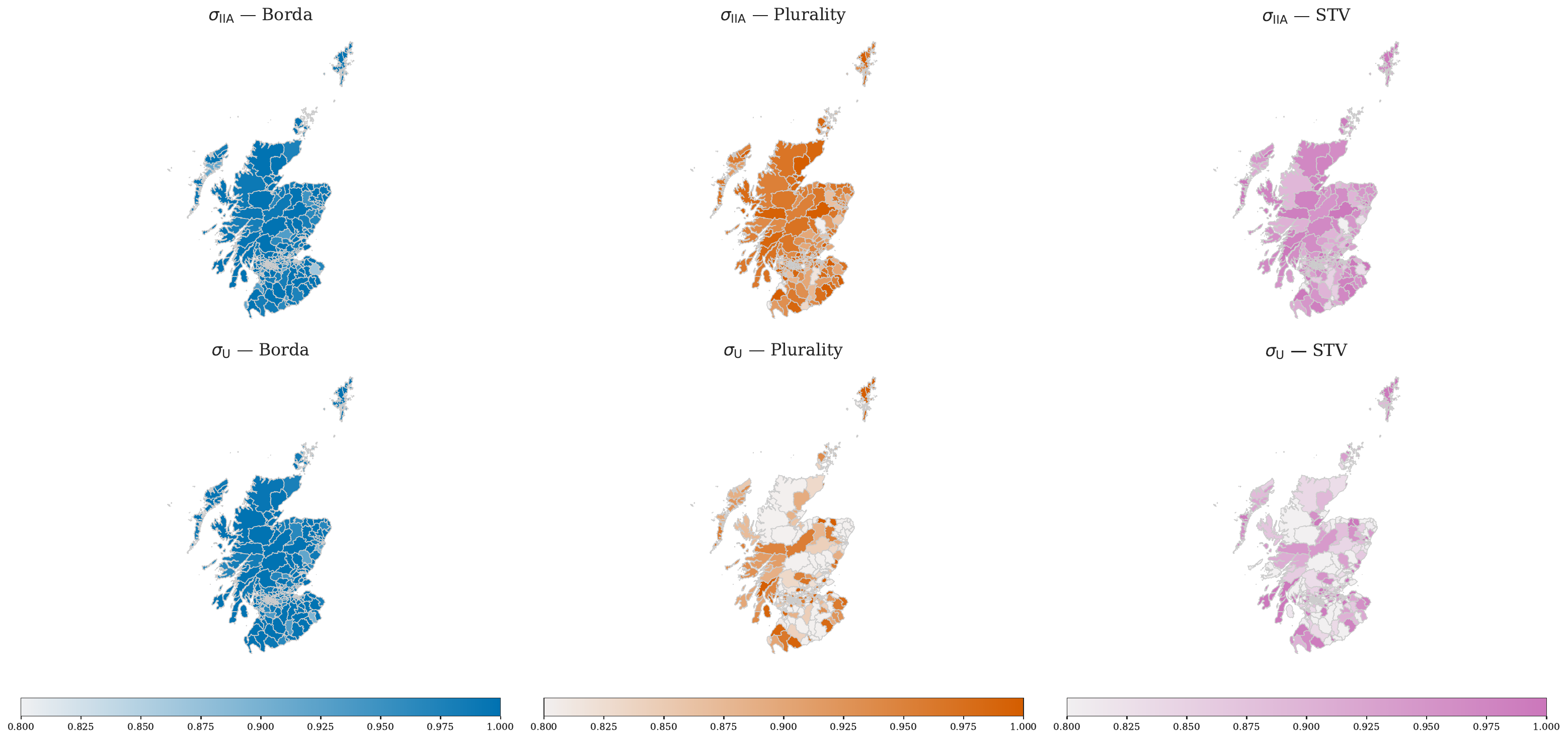}
  \caption{
Ward-level heatmaps showing the $\SIIA$ and $\SU$ metrics for Borda, Plurality, and STV, applied to the 2012 Scottish local elections. 
}
\label{fig:scottish-maps}
\end{figure}

%%%
\subsection{Synthetic data}
This section reports the outcomes of trials run on  synthetic preference profiles generated using a model derived from a Bradley–Terry (BT) process.  Bradley-Terry is a classical statistical model for generating ranked preferences based on latent candidate strengths. These trials include 100 independent profiles of 1000 ballots each for every pictured box.

One of the parameters in this model is a Dirichlet parameter $\alpha$ representing candidate strength, where $\alpha\to\infty$ tends to produce equal latent utilities for the candidates, and $\alpha\to 0$ gives a single candidate utility approaching 100\% for the voters.  We call these {\em uniform} and {\em dominant} scenarios for candidate strength.  Figure~\ref{fig:BT-boxplots} shows the results.

\begin{figure}[htb!]
  \centering
\includegraphics[width=5in]{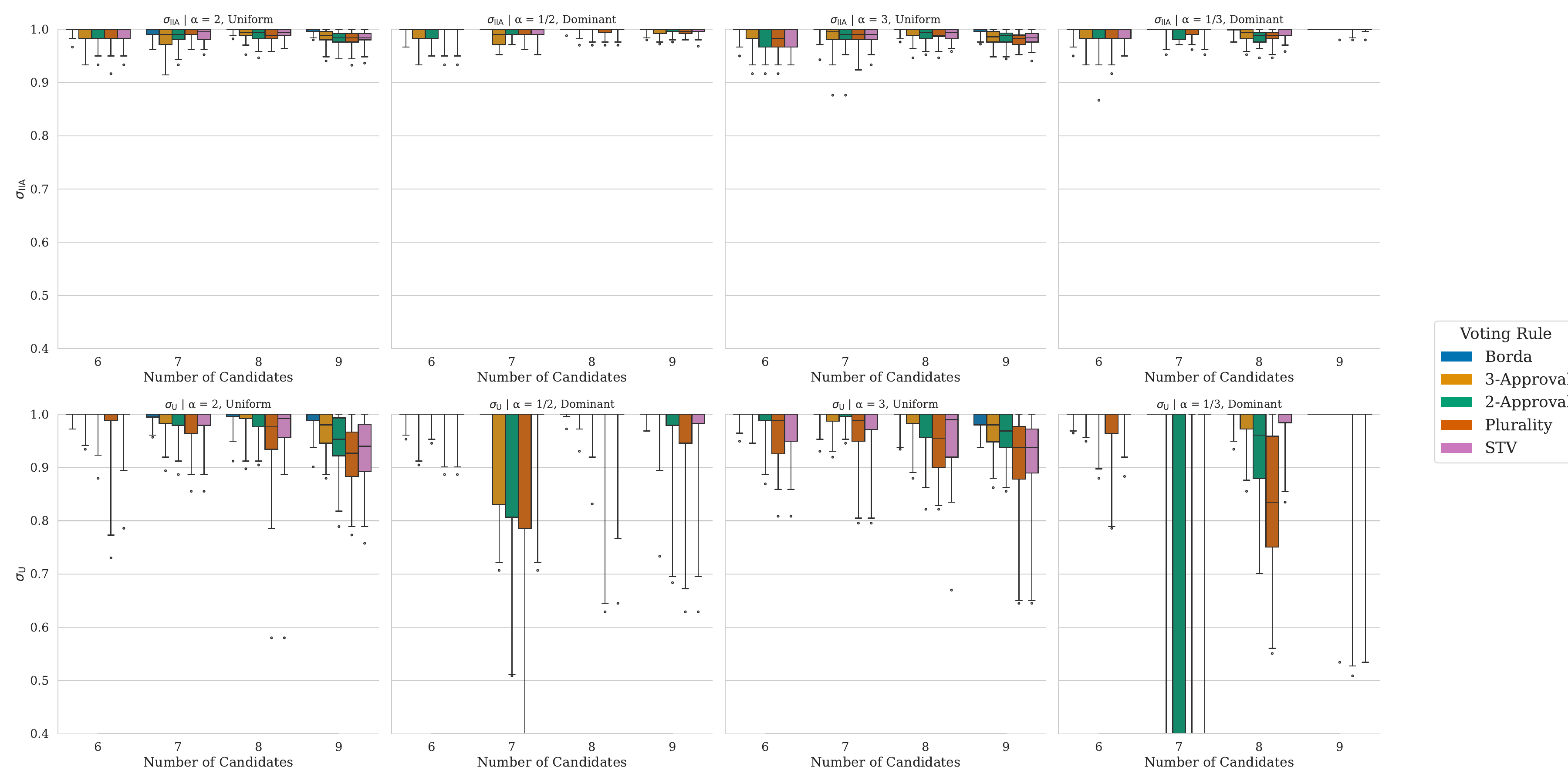}
\caption{Boxplots of $\SIIA$ and $\SU$ for the five voting rules applied to profiles generated from a Bradley-Terry model. We use Dirichlet parameters to vary candidate strength:  $\alpha=2,3$ give  scenarios in which voters tend to have more uniform preferences among candidates, while $\alpha=1/2,1/3$ moves towards  scenarios with a strong candidate preference. We have kept the $[0.4,1]$ range as in other plots for ease of comparison, though a few $\SU$ outputs fall outside the window. }\label{fig:BT-boxplots}
\end{figure}

%%%%%%%%%
\section{Discussion and Significance}\label{sec:discussion}
In this paper, we introduce quantitative metrics that relax two foundational social choice axioms: Independence of Irrelevant Alternatives (IIA) and the Unanimity criterion (U).
Our analysis demonstrates the value of extending classical binary axioms to gradated metrics, enabling the assessment of voting rules along a continuum of axiom compliance.
Across both real-world elections from Scottish local government and synthetic  datasets generated with a Bradley-Terry statistical model, the Borda rule consistently achieves strong marks in its stability and its tendency to comport with majority preferences.  
The framework also recovers Arrow's Impossibility Theorem in a metric setting, providing a new lens for interpreting the trade-offs between stability and majoritarianism.  

However, readers will note that we have refrained from using the common phrasing in which IIA and U are called axioms of {\em fairness}.  This analysis has led us to a more nuanced view of the desirability of the axioms under discussion.  While multi-agent cooperation may well benefit from stability and a majoritarian tilt, it is less obvious in the setting of political representation, where these attributes could easily cut against other important normative values, like {\em responsiveness} (the tendency for outcomes to shift to reflect changing preferences) and {\em proportionality} (which in particular requires that sizable minorities receive nontrivial representation).

We believe that this work may be of independent interest for quantifying axiom compliance in welfare economics and in applications involving large language models (LLMs), where incorporating more axioms in the regularization objectives---rather than focusing solely on utility or main accuracy loss---can lead to more balanced and principled decision-making.

In particular, \cite{mishra2023ai} examine the challenges of utilizing democratic processes, such as reinforcement learning from human feedback, to align AI systems with human values. They argue that no voting rule can fairly and consistently aggregate diverse preferences without violating essential principles such as fairness and individual rights.  In contrast, we propose new metrics that relax Arrow’s axioms, allowing researchers to evaluate how closely a voting rule adheres to principles such as IIA and U. Instead of treating these principles as binary pass/fail conditions, we suggest continuous measures that can be applied in real-world contexts, including AI systems. The article by \cite{mishra2023ai} illustrates the limitations of what can be achieved perfectly, whereas we provide tools to navigate these limitations as effectively as possible.

\subsection{Limitations and Future Work}

The definitions offered here are currently limited to voting rules that produce complete, tie-free rankings over all candidates. 
Importantly, the STV elections that generated the Scottish data were actually run to produce winner sets of fixed size (typically $k=3,4$, as mentioned above). This may be reflected in voting behavior; for instance, Scottish ballots in $k$-winner contests often rank only $k$ candidates and leave the other positions blank.  Therefore, representing STV output as a ranking, while meaningful, is not fully aligned with the nature of the elections where it was used.

Ideas for future work include the following.
\begin{itemize}
    \item Extend definitions to accommodate generalized ballots (with ties in arbitrary positions) as input and output.  This allows for a more natural representation of STV output as a winner {\em set} rather than ranking.
    \item Consider other binary axioms that are relevant in voting theory and machine learning, like monotonicity.
    \item Extend experiments to more fully explore the Pareto frontier for $(\SIIA,\SU)$.
\end{itemize}

\section{Acknowledgment}

The authors thank Erica Chiang and Sujai Hiremath for helpful early discussions, and Peter Rock for assistance with coding-related technical concepts.

\bibliographystyle{plainnat}
\bibliography{bibliography}

%%%%%%%%%%%%%%%%%%%%%%%%%%%%%%%%%%%%%%%%%%%%%%%%%%%%%%%%%%%%

\newpage

\renewcommand{\thesection}{\Alph{section}}
\counterwithin{equation}{section} 

\appendix
\textbf{Appendix}
\section{Proofs}

\textbf{Proof of Proposition 2}
\begin{proof}
For any candidates $A,B,C\in\candidates$ and profile $P\in\profiles$, $X_{A,B}(P)=X_{A,B}(P^C)$ because removing $C$ does not affect the relative order of any other candidates. Therefore, if $f$ satisfies IIA, $f(P^C)$ and $f(P)$ rank $A$ and $B$ in the same relative order for any pair of candidates $A,B\in\candidates$ with $A,B\neq C$. This implies $f(P^C)=f(P)^C$, which means $\SIIA(f,P)=1$.

Conversely, if $\SIIA(f,P)=1$ for all $P\in\profiles$, then it must be true that $f(P)^C=f(P^C)$ for all profiles $P\in\profiles$ and candidates $C\in\candidates$. By induction, this means that $f(P)^Z=f(P^Z)$ for any subset of candidates $Z\subseteq\candidates$, where we take $P^Z$ and $f(P)^Z$ to mean $P$ and $f(P)$ but with all candidates in $Z$ removed, respectively.

Now suppose that for some profiles $P,P'\in\profiles$ and candidates $A,B\in\candidates$, $X_{A,B}(P)=X_{A,B}(P')$. If we let $Z=\candidates\setminus\{A,B\}$, then $P^Z=(P')^Z$ because the remaining candidates in both profiles are just $A$ and $B$, and we already know the voters agree on their relative order. Therefore, using this and the fact we proved in the last paragraph, we have that
\[
    f(P)^Z=f(P^Z)=f((P')^Z)=f(P')^Z,
\]
which means that $f(P)$ and $f(P')$ rank $A$ and $B$ in the same order. Thus, $f$ satisfies IIA.

% then for any two candidates $A,B$ and for $C \neq A,B$, the relative preferences of $A,B$ in profile $P$ and $P^C$ are the same, i.e., $X_{A,B} (P) = X_{A,B} (P^{C})$, hence we can conclude from the the definition of IIA that $f(P)^C = f(P^C)$ for all $C\in \C$. But then $\sum_\C \dswap\left(f(P^C),f(P)^C\right)=0$, so $\SIIA(f,P)=1$.

% Conversely, suppose $\SIIA(f,P) =1$ for all $P$, then the sum is zero, so each term is zero, which means that $f(P)^C=f(P^C)$ for all $P,C$.
% Since the outcome  is the same whether a candidate is removed before or after running $f$, we can apply this iteratively and obtain $f(P)^\Z=f(P^\Z)$ for subsets $\Z\subset \C$.  

% Suppose now we have two profiles $P,P'$ such that $X_{A,B} (P) = X_{A,B} (P')$ for some two candidates $A,B$. Let $\Z = \C \setminus \{A,B\}$.
% Then $P^{\Z} = (P')^{\Z}$, because the remaining candidates are just $A$ and $B$. 
% Since $f(P)^\Z=f(P^\Z)=f((P')^\Z)=f(P')^\Z$, we have established Arrow's IIA.
\end{proof}

\begin{algorithm}
\caption{Computes a $ \SU$-optimal voting rule $f$}\label{alg:uf_opt}
\begin{algorithmic}
\Procedure{$f$}{$P$}
    \State $G \gets \PWCGraph(P)$
    \State $T \gets \TopSort(G)$
    \While{$T=\textrm{None}$}
        \State $m \gets \textrm{min}([e.weight\textrm{ for } e \textrm{ in }G.edges])$
        \For{$e \in G.edges$}
            \If{$e.weight = m$}
                \State $G.remove(e)$
            \EndIf
        \EndFor
        \State $T \gets \TopSort(G)$
    \EndWhile
    \State \Return $T$
\EndProcedure
\end{algorithmic}
\end{algorithm}

In order to prove Theorem \ref{thm:uf_alg}, it will help to have the following notation. 

For any voting rule $f$ and profile $P$, we will let $D(f,P)$ denote the set of pairs of candidates for which there is an edge in $G(P)$ that disagrees with $f(P)$. By this definition, $\{A,B\}\in D(f,P)$ if and only if $I_{A,B}(f,P)<1/2$.

We will also let $\delta_{A,B}(P)$ be the margin, i.e. the weight of an edge in $G(P)$ between $A$ and $B$, or zero if there is no edge.

First, we formally show $\SU(f,P)$ is inversely related to the maximum margin edge in $G(P)$ that disagrees with $f(P)$. We then use this to prove Theorem $\ref{thm:uf_alg}$.

% The following is a Lemma that will be useful for proving Lemmas \ref{lem:uf_characterization} and \ref{lem:uf_alg}, along with its proof.

% \begin{lemma}
% If 
% \end{lemma}

% \begin{proof}
% Since $(A,B)$ is an edge in the pairwise comparison graph, by definition of $G(P)$ this means that $\|X_{A,B}(f,P)\|_1>\|X_{B,A}(f,P)\|_1$. Because of this and the fact that $\|X_{A,B}(f,P)\|_1+\|X_{B,A}(f,P)\|_1=n$ (which follows from the definition of this embedding), we must have that $\|X_{B,A}(f,P)\|_1<n/2$. 

% The edge $(A,B)$ disagrees with $f(P)$ if and only if $B\succ_{f(P)}A$, and by the definition of $I_{A,B}(f,P)$, this is true if and only if $I_{A,B}(f,P)=(1/n)\|X_{B,A}(P)\|_1$. Since $\|X_{B,A}(P)\|_1<n/2$, this is true if and only if $I_{A,B}(f,P)<1/2$.
% \end{proof}

\begin{lemma}\label{lem:uf_and_margins}
If $P$ is a profile with a Condorcet cycle, then $\SU(f,P)$ is a strictly decreasing function of the maximum weight of any edges in $G(P)$ that disagree with $f(P)$, and this decreasing function is independent of $f$. 
\end{lemma}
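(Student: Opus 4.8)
The plan is to derive a closed form for $\SU(f,P)$ that depends only on $n$ and on a single scalar---the largest weight among the edges of $G(P)$ that disagree with $f(P)$---and then read off strict monotonicity from that formula.

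First I would rewrite the alignment in margin coordinates. For any pair $A,B$, set $x=\|X_{A,B}(P)\|_1$ and $y=\|X_{B,A}(P)\|_1$; since both vectors have nonnegative entries these equal $X_{A,B}\cdot\mathbbm{1}$ and $X_{B,A}\cdot\mathbbm{1}$, so the Properties Lemma gives $x+y=n$ and $x-y=\delta_{A,B}(P)$. Consequently, if $\{A,B\}\in D(f,P)$---so $f(P)$ orders the pair against the majority---the losing count is the one selected in the definition of $I_{A,B}$, and one computes $I_{A,B}(f,P)=\tfrac12-\tfrac{|\delta_{A,B}(P)|}{2n}<\tfrac12$, whereas for an agreeing edge or a tie one gets $I_{A,B}(f,P)\ge\tfrac12$.

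Next I would identify $M(f,P)$. Since $I_{A,B}<\tfrac12$ exactly on $D(f,P)$, and the value $\tfrac12-\tfrac{|\delta_{A,B}(P)|}{2n}$ is strictly decreasing in the margin magnitude, the minimum $M(f,P)=\min_{A,B}I_{A,B}(f,P)$ is attained at a disagreeing edge of maximum weight; writing $w=w(f,P)$ for that maximum weight, $M(f,P)=\tfrac12-\tfrac{w}{2n}$. Here the Condorcet-cycle hypothesis enters: the complete ranking $f(P)$ must reverse at least one arc of the directed cycle, so $D(f,P)\neq\emptyset$ and $w>0$, placing us in the regime $M<\tfrac12$ where $\SU(f,P)=\tfrac{M}{1-M}$. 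Finally, substituting $M=\tfrac{n-w}{2n}$ and simplifying yields
\[
  \SU(f,P)=\frac{M}{1-M}=\frac{n-w}{n+w},
\]
an expression in which $f$ appears only through $w=w(f,P)$. Since $\tfrac{d}{dw}\tfrac{n-w}{n+w}=\tfrac{-2n}{(n+w)^2}<0$, this is strictly decreasing in $w$ and independent of $f$, which is exactly the claim.

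The one step needing genuine care---the main obstacle---is the argument that a Condorcet cycle forces $D(f,P)\neq\emptyset$ (hence $w>0$ and $M<\tfrac12$); everything else is bookkeeping in margin coordinates. I would justify it by the standard fact that any linear order on the vertices of a directed cycle must disagree with at least one of its arcs: taking the cycle vertex ranked lowest by $f(P)$, its outgoing arc points to a higher-ranked vertex and therefore disagrees with $f(P)$, so the maximum defining $w$ is over a nonempty set.
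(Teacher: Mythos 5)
Your proposal is correct and takes essentially the same route as the paper's proof: both express the alignment of a disagreeing pair in margin coordinates ($I_{A,B}=\tfrac12-\delta_{A,B}/2n$), use the Condorcet cycle to show the set of disagreeing edges is nonempty so the minimum defining $M$ is attained at the maximum-weight disagreeing edge, and conclude that $\SU$ is a strictly decreasing function of that weight alone. The only cosmetic differences are that you write out the closed form $\SU=(n-w)/(n+w)$ where the paper leaves it as a composition $g\circ h$, and you prove nonemptiness directly via the lowest-ranked cycle vertex rather than by noting that $f(P)$ would otherwise be a topological sort of a cyclic graph.
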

\begin{proof}
Expanding out the definition of $\SU(f,P)$, we have that
\[
    \SU(f,P)=\min_{\{A,B\}\subseteq\candidates}\begin{cases}
        \frac{I_{A,B}(f,P)}{1-I_{A,B}(f,P)} & I_{A,B}(f,P)<1/2\\
        1 & o.w.
    \end{cases}.
\]
By the definition of $D(f,P)$, $\{A,B\}\in D(f,P)$ if and only if an edge between $A$ and $B$ in $G(P)$ disagrees with $f(P)$, which is true if and only if $I_{A,B}<1/2$. Since $P$ has a Condorcet cycle, $G(P)$ has a directed cycle, and so $D(f,P)$ cannot be empty, since this would imply that $f$ is a topological sort of $G(P)$. Therefore, since $\frac{I_{A,B}(f,P)}{1-I_{A,B}(f,P)}<1$ whenever $I_{A,B}(f,P)<1/2$ and $D(f,P)$ is non-empty, we must have that 
\[
    \min_{\{A,B\}\subseteq\candidates}\begin{cases}
        \frac{I_{A,B}(f,P)}{1-I_{A,B}(f,P)} & I_{A,B}(f,P)<1/2\\
        1 & o.w.
    \end{cases}=\min_{\{A,B\}\in D(f,P)}\frac{I_{A,B}(f,P)}{1-I_{A,B}(f,P)}.
\]
% Also, since $g(x):=x/(1-x)$ is an increasing function of $x$ on $[0,1/2)$, we must have that
% \[
%     \min_{\{A,B\}\in D(f,P)}\frac{I_{A,B}(f,P)}{1-I_{A,B}(f,P)}=\min_{\{A,B\}\in D(f,P)}I_{A,B}(f,P).
% \]
Now suppose that $\{A,B\}\in D(f,P)$. Then there is an edge in $G(P)$ that disagrees with $f(P)$, and so without loss of generality, $(A,B)$ is that edge and $B\succ_{f(P)}A$. This means that $I_{A,B}(f,P)=(1/n)\|X_{B,A}(P)\|_1$, and so the weight of edge $(A,B)$ is
\[
    \delta_{A,B}(P)=\|X_{A,B}(f,P)\|_1-\|X_{B,A}(f,P)\|_1=n-2\|X_{B,A}(f,P)\|=n-2nI_{A,B}(f,P).
\]
Thus, if we define $h(x):=1/2-x/2n$, then $I_{A,B}(f,P)=h(\delta_{A,B}(P))$. If we define $g(x):=x/(1-x)$, then we can see that
\[
    \sigma_{U}(f,P)=\min_{\{A,B\}\in D(f,P)}g(h(\delta_{A,B}(P))).
\]
Since $g$ is increasing on the interval $[0,1)$, $h$ is decreasing on the interval $(-\infty,\infty)$, and $h([0,n])=[0,1/2]\subset[0,1)$, $g\circ h$ is increasing on the interval $[0,n]$, and so since $\delta_{A,B}(P)\in[0,n]$, we must have that
\[
    \min_{\{A,B\}\in D(f,P)}g(h(\delta_{A,B}(P)))=g\left(h\left(\max_{\{A,B\}\in D(f,P)}\delta_{A,B}(P)\right)\right).
\]
\end{proof}

Algorithm \ref{alg:uf_opt} formalizes the algorithm described in the main paper. Here, $\PWCGraph(P)$ is a procedure that computes the pairwise comparison graph given a profile $P$, and this runs in time $O(m^2n)$ since it just computes $\binom{m}{2}$ margins, each of which is a sum of $n$ voters' preferences. Also, $\TopSort(G)$ is a procedure that computes a topological sort of a graph $G$ if one exists, and otherwise returns ``None'', and the example algorithm that we choose in this case is Kahn's algorithm, which runs in time $O(|V(G)|+|E(G)|)$, where $V(G)$ and $E(G)$ are the vertex and edge sets of $G$, respectively.

\vspace{.5cm}
\textbf{Proof of Theorem \ref{thm:uf_alg}:}

\begin{proof}
% This voting rule $f$ will be defined by our algorithm. This algorithm involves a subroutine that returns a topological sort of a directed graph if it exists or returns with some indication that it doesn't exist if it doesn't, and for our purposes we only care that it runs in time $O(|V(G)|+|E(G)|)$ where $V(G)$ and $E(G)$ are the vertex and edge sets of the graph $G$, respectively (Kahn's algorithm is one such algorithm, for example). It's important to note that there are many algorithms one could choose here since the topological sort is not unique, and so the algorithm must at some points (and on certain inputs) break ties. Thus, the voting rule described here is also not unique, but for the purposes of our proof, we assume that these choices have been made.

% Let $\TopSort(G)$ be such a choice of topological sort algorithm that runs in time $O(|V(G)|+|E(G)|)$, where it returns either a topological sort if it exists or None if it doesn't. Let $\PWCGraph(P)$ be a subroutine to compute the pairwise comparison graph from a given profile $P$. Our algorithm is as follows, where $f$ is what we'll call the voting rule we are computing:

\textbf{Part 1}: Algorithm \ref{alg:uf_opt} runs in time $O(m^2n+m^3)$.\\
This algorithm runs in time $O(m^2n+m^3)$ since $\PWCGraph(P)$ runs in time $O(m^2n)$, $\TopSort(G)$ runs in time $O(m^2)$ (since the number of vertices in $G(P)$ is $m$ and the number of edges is at most $\binom{m}{2}$), and there can be at most $m$ iterations of the while loop since at least one edge is removed from $G$ each iteration.

This algorithm also always outputs a ranking of the candidates because the while loop must end, since if we remove all the edges of $G$, then any ordering of the candidates is a topological sort of the remaining graph.\\\newline
\textbf{Part 2}: The voting rule that Algorithm \ref{alg:uf_opt} computes maximizes $\SU$ on all profiles.\\
Let $f_U$ be the voting rule that Algorithm \ref{alg:uf_opt} computes. Suppose for the sake of contradiction that there exists some profile $P$ and some other voting rule $g$ for which $g$ has a strictly higher $\SU$ score, or in other words,
\[
    \SU(f_U,P)<\SU(g,P).
\]
With loss of generality, we can also suppose that $g$ maximizes $\SU$ on profile $P$. Since $\SU(\cdot,\cdot)\in[0,1]$, this means that $\SU(f_U,P)<1$. This means that $P$ must have some Condorcet cycle, since if it did not, then $G(P)$ would have a topological sort, and so $f_U(P)$ would be one of those topological sorts and we would have that $\SU(f_U,P)=1$. %This further means that $\SU(g,P)<1$, since we can only have that $\SU(g,P)=1$ if $g(P)$ is a topological sort of $G(P)$, and $G(P)$ cannot have any topological sorts.

% We can prove this by breaking it into two cases:\\\newline
% \textbf{Case 1:} $\SU(g,P)=1$.\\
% If $\SU(g,P)=1$, then $g(P)$ must be a topological sort of $G(P)$, and therefore Algorithm \ref{alg:uf_opt} will find a topological sort of $G(P)$ without ever entering the while loop. This means that $f_U(P)$ will also be a topological sort of $G(P)$, and so $\SU(f_U,P)=1$, which is a contradiction.\\\newline
% \textbf{Case 2:} $\SU(g,P)<1$.\\
% In this case, since $g$ maximizes $\SU$ on profile $P$, $P$ must have a Condorcet cycle, since otherwise there would exist a topological sort and so a voting rule that outputs one of those sorts would have a strictly higher $\SU$ score than $g$ on $P$.

Therefore, by Lemma \ref{lem:uf_and_margins}, $\SU(f_U,P)$ is a strictly decreasing function of $\max_{\{A,B\}\in D(f,P)}\delta_{A,B}(P)$ that does not depend on $f_U$, and the same is true for $g$. Since $\SU(f_U,P)<\SU(g,P)$, this means that
\begin{equation}\label{eqn:margin_inequality}
    \max_{\{A,B\}\in D(f_U,P)}\delta_{A,B}(P)>\max_{\{A,B\}\in D(g,P)}\delta_{A,B}(P).
\end{equation}

Algorithm \ref{alg:uf_opt} cannot exit the while loop before all edges that disagree with $f_U(P)$ have been removed from $G(P)$, since otherwise $f_U(P)$ cannot be a topological sort of the remaining graph at that point.

Furthermore, since Algorithm \ref{alg:uf_opt} removes edges in increasing weight order and in rounds of all edges of a particular weight and equation \ref{eqn:margin_inequality} implies that there is at least one edge in $G(P)$ that disagrees with $f_U(P)$ and whose weight is strictly larger than that of all edges that disagree with $g(P)$, all edges that disagree with $g(P)$ will be removed before the while loop exits, and there will be at least one more iteration of the while loop right after all edges that disagree with $g(P)$ have been removed.

However, right after all edges that disagree with $g(P)$ have been removed, $g(P)$ must be a topological sort of the remaining graph. This is a contradiction, since it means that the algorithm would exit the while loop at this point, and we just reasoned above that the while loop must continue after this point.

\end{proof}

\begin{figure}[H]

  \centering
  \includegraphics[width=0.99\linewidth]{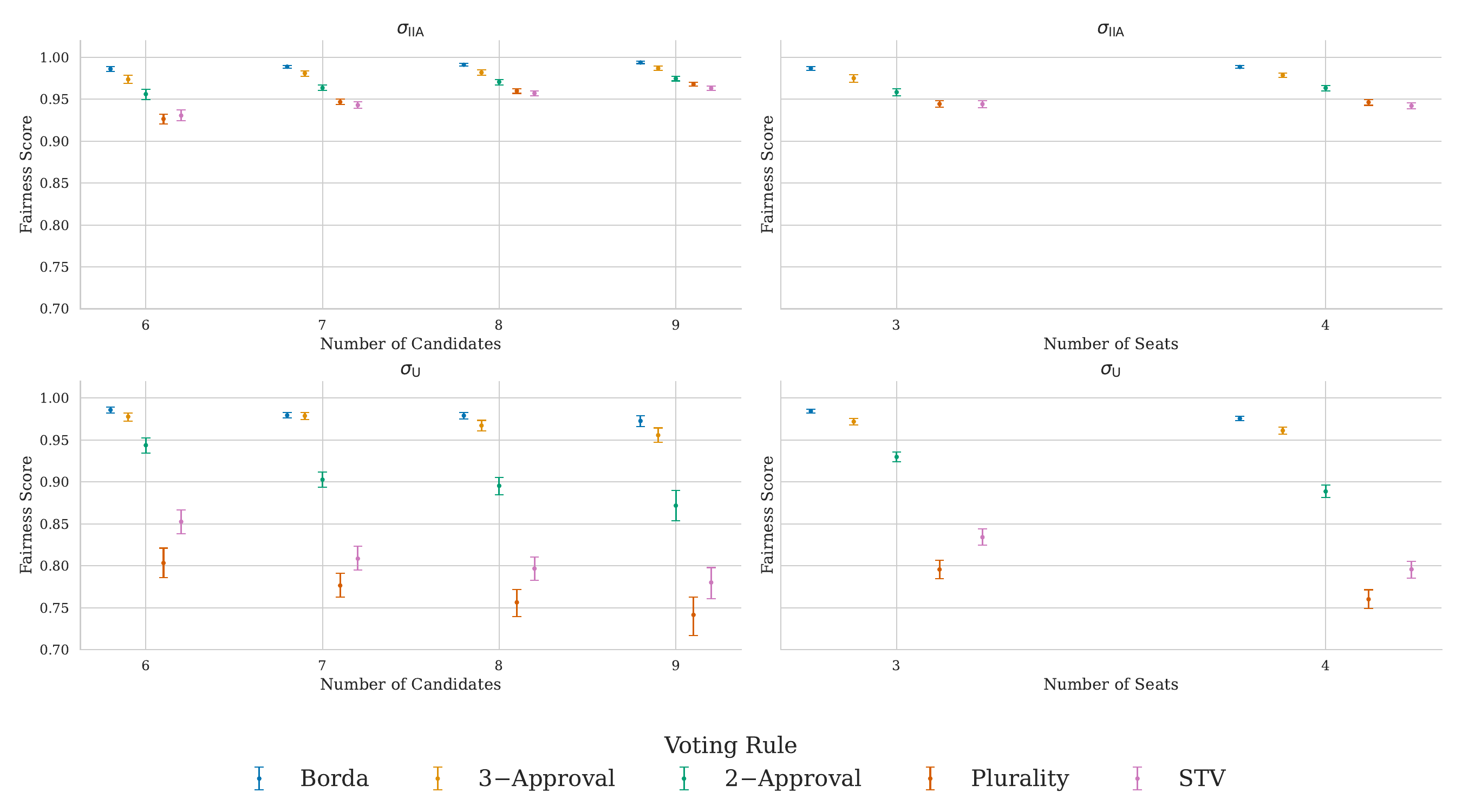}
 \caption{
 Bootstrapped mean and 95 \% confidence intervals for $\SIIA$ (top row) and $\SU$ metric (bottom row). 
}
\label{boot_fig}
\end{figure}

\section{Uncertainty Quantification}

Bootstrapping voter profiles are used to evaluate the reliability and variability of estimated fairness metrics for different voting rules, the number of candidates, and the number of seats. This approach helps to measure how sensitive the model results are to sampling variability in the original data (\citet{efron1994introduction}). The bootstrapping process involves repeatedly sampling the voters with replacement to create multiple resampled data sets (B = 1000). For each resample, the preferences of the key variables are recorded. The percentile bootstrap methodology produces a distribution of estimates from which confidence intervals, standard errors, and stability assessments can be calculated.  The inferential bootstrap results compliment the descriptive box plots of $ \SIIA$ and $ \SU$. 

Figure \ref{boot_fig} illustrates the notable consistency of the metric $\SIIA$ across the five scoring rules in terms of variability and monotonicity in the fairness score, considering both the number of candidates and the number of seats. In contrast, for $\SU$ , there are three groupings: \{Borda, 3-Approval\}, \{2-Approval\}, and \{Plurality, STV\}. However, within these groupings, the fairness scores remain consistent in varying the number of candidates and seats.

\begin{samepage}
\section{Additional plots}

\begin{figure}[H]
  \centering
  \includegraphics[width=0.99\linewidth]{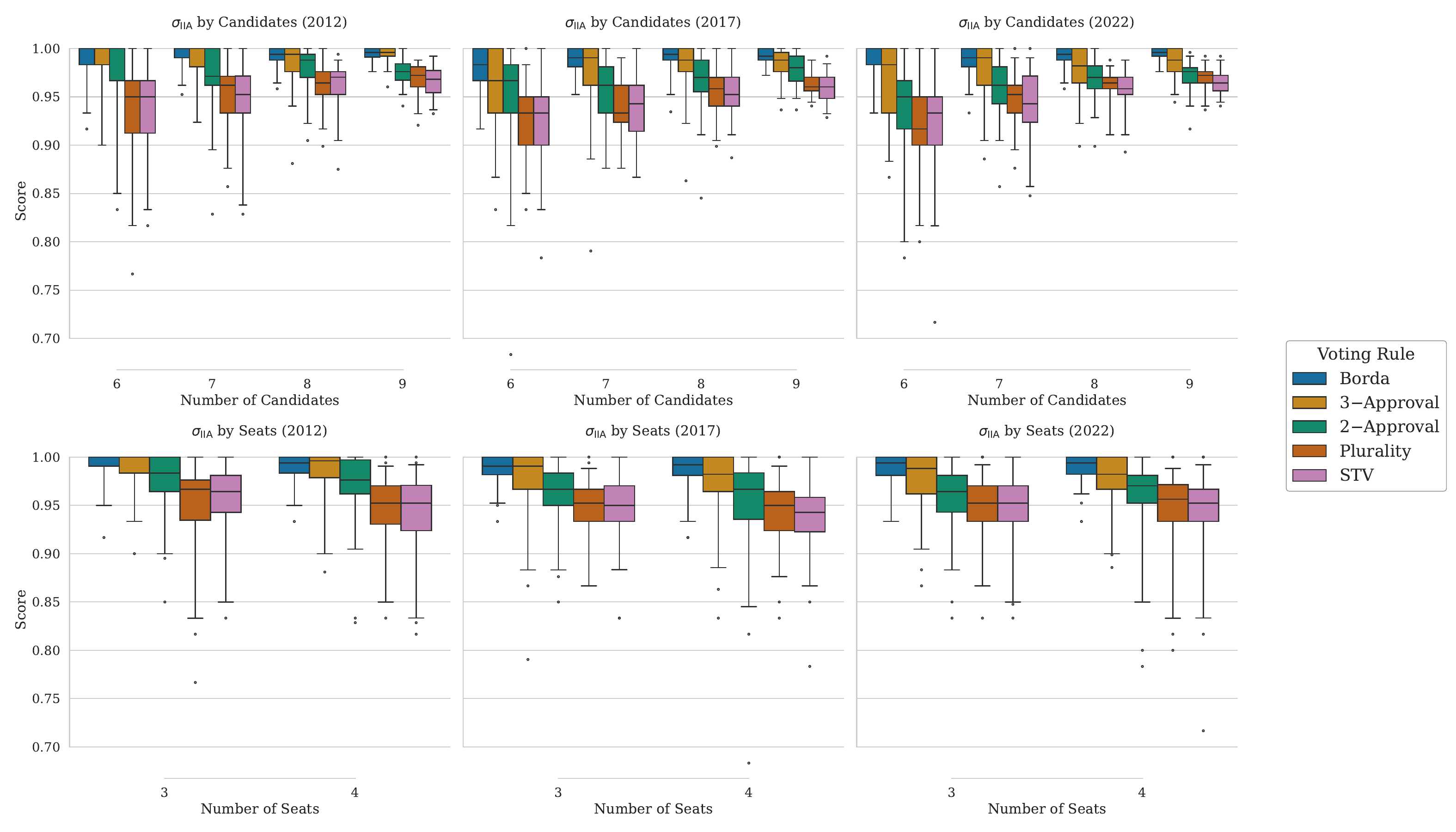}
 \caption{
$ \SIIA$ distributions by years (2012,2017,2022) of five voting rules on Scottish local elections with 6–9 candidates and 3-4 seats 
}

\end{figure}

\begin{figure}[H]
  \centering
  \includegraphics[width=0.99\linewidth]{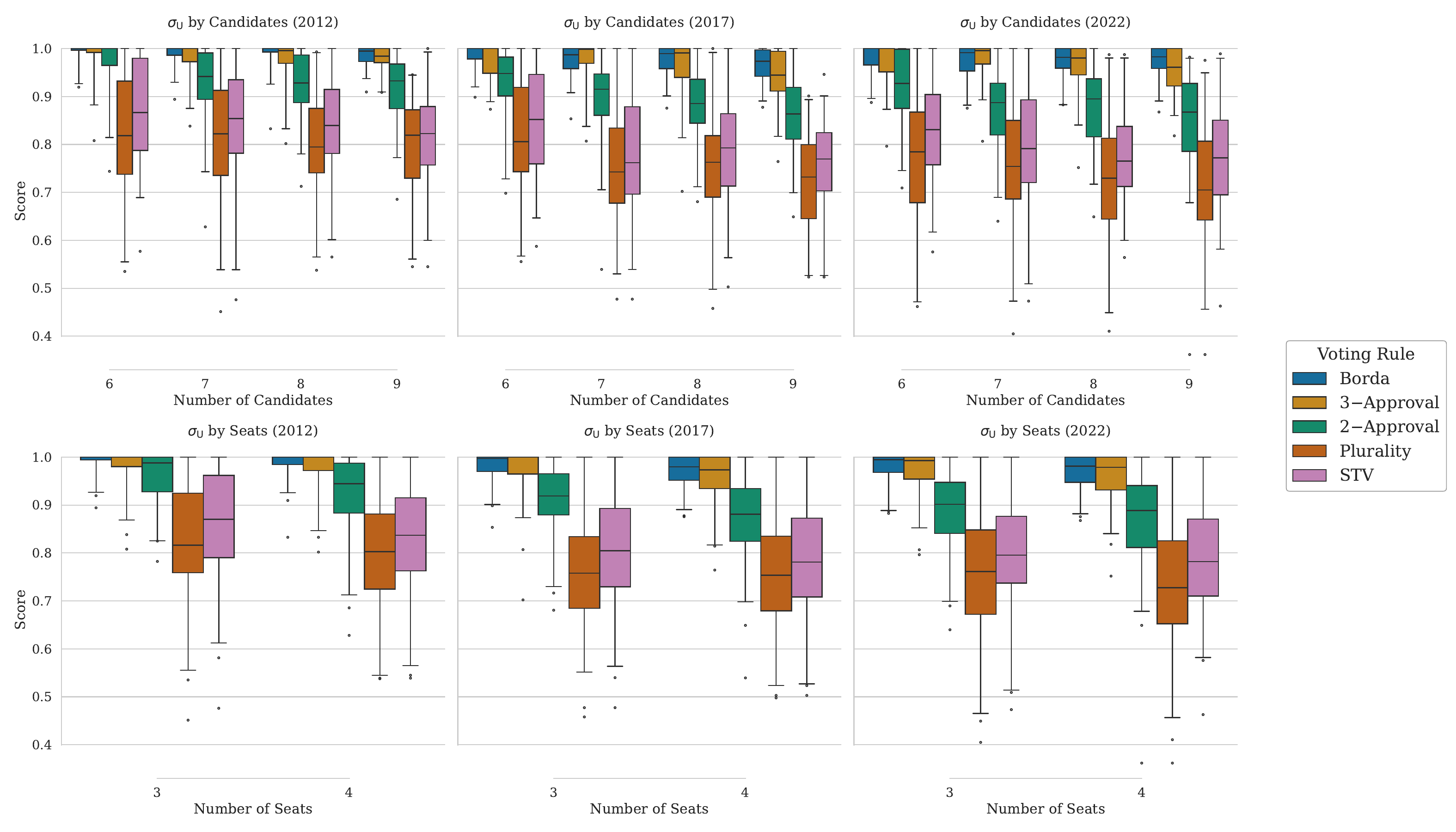}
 \caption{
$ \SU$ distributions by years (2012, 2017,2022) of five voting rules on Scottish local elections with 6–9 candidates and 3-4 seats 
}

\end{figure}
\end{samepage}

\section{Hardware, Software, and Reproducibility Details}

All experiments were conducted on a MacBook Pro with an Apple M1 chip (8-core CPU: 4 performance and 4 efficiency cores) and 16 GB of unified memory, running macOS 13.x. We used Python 3.10.1 along with the following Python libraries: \texttt{votekit 3.1.0}, \texttt{pandas}, \texttt{numpy}, \texttt{matplotlib}, \texttt{geopandas} and \texttt{seaborn}. The complete analysis of the Scottish STV election data took approximately 2 hours, while the Bradley--Terry simulation experiments required around 3 hours of compute time. All the data and code used in this paper are publicly available at \url{https://github.com/Suvadip2776/Quantitative_fairness}.

%\end{document}

\end{document}